\newtheorem{assumption}{Assumption}
\newtheorem{theorem}{Theorem}
\newtheorem{lemma}{Lemma}
\newcommand{\leqnomode}{\tagsleft@true}
\newcommand{\reqnomode}{\tagsleft@false}
\newcommand{\abs}[1]{\left\lvert #1 \right\rvert}
\DeclarePairedDelimiter\norm\lVert\rVert
\newcommand{\sumtT}{\sum_{t=1}^T}
\newcommand{\E}{\mathbb{E}}
\DeclareMathOperator{\sgn}{sign}
\DeclareMathOperator{\cov}{cov}
\title{Convergence rate of estimators of clustered panel models with misclassification\thanks{The research presented in this paper started as  Appendix G of a previous version of \textcite{dzemskiokui2019confidence} and has been substantially extended. The relevant results are removed from \textcite{dzemskiokui2019confidence}. Okui gratefully acknowledges the financial support of the School of Social Sciences and a New Faculty Startup Grant at Seoul National University and from the Housing and Commercial Bank Economic Research Fund in the Institute of Economic Research at Seoul National University. Dzemski gratefully acknowledges financial support from Jan Wallanders och Tom Hedelius samt Tore Browaldhs stiftelse grant P19-0079. A part of this research was done while Okui was at NYU Shanghai.}}
\author{Andreas Dzemski\thanks{Department of Economics, University of Gothenburg,
P.O. Box 640, SE-405 30 Gothenburg, Sweden.
Email: andreas.dzemski@economics.gu.se
} \ 
and Ryo Okui\thanks{Department of Economics and the Institute of Economic Research, Seoul National University,
Building 16, 1 Gwanak-ro, Gwanak-gu, Seoul, 08826, South Korea.
Department of Economics, University of Gothenburg,
P.O. Box 640, SE-405 30 Gothenburg, Sweden.
		Email: okuiryo@snu.ac.kr;
}}
\date{\today}
\begin{document}

\onehalfspacing

\maketitle
	
\begin{abstract}
	We study \emph{kmeans} clustering estimation of panel data models with a latent group structure and $N$ units and $T$ time periods under long panel asymptotics. We show that the group-specific coefficients can be estimated at the parametric root $NT$ rate even if error variances diverge as $T \to \infty$ and some units are asymptotically misclassified. This limit case approximates empirically relevant settings and is not covered by existing asymptotic results.
\par 
\vspace{4mm}
\textit{Keywords:} Panel data, latent grouped structure, clustering, kmeans, convergence rate, misclassification.
\par 
\vspace{1mm}
\textit{JEL codes:} C23, C33, C38

\textit{Mathematical Subjects Classification (2010):} 62H30, 62H12

\textit{Declarations of interest}: none

\end{abstract}

\section{Introduction}

Panel models can account for unobserved heterogeneity by dividing units into a finite number of latent groups and allowing a unit's coefficients to be group-specific \parencite{bonhomme2015grouped,su2016identifying, vogt2015classification, wang2016homogeneity, okuiwang2018}. Estimators of such models simultaneously estimate group memberships and group-specific coefficients. For example, \textcite{bonhomme2015grouped} propose a \textit{kmeans}-type estimator and \textcite{su2016identifying} propose the CLasso estimator that is based on solving a penalized regression program. These two and other related estimators are justified under a long panel asymptotic framework that sends both the number of units $N$ and the number of time periods $T$ to infinity. Existing theoretical results show that coefficients that are group-specific and time invariant can be estimated at a root $NT$ rate, i.e., at the parametric rate. In this paper we show that the parametric rate can be obtained even if some units have a positive probability of being misclassified in the limit. 
This limit case is highly relevant in practice since it is common to misclassify at least some units in empirical applications \parencite{BonhommeDiscretize}. However, existing results do not apply in such settings.

Existing asymptotic results for linear panel models assume that the variance of the error term is universally bounded. From this assumption, it can then be shown that group memberships can be estimated uniformly consistently, i.e., the probability of misclassifying one or more units vanishes as $N, T \to \infty$. This implies that the rate at which group-specific coefficients can be estimated is the same as under a known group structure and is therefore equal to the parametric rate. 

However, the assumption of a universal bound on the variance of the error term may not reflect real circumstances. It implies that the asymptotic limit as $T \to \infty$ prescribes that, for each unit, the level of statistical noise is negligible when compared to the number of observed time periods. This is not characteristic of typical empirical applications. The number of observed time periods is often rather small and, at least for some units, statistical noise plays an important role in determining the outcome.

In this paper, we extend previous theoretical results to a heteroscedastic setting in which units are endowed with unit-specific error variances $\sigma_1^2, \dotsc, \sigma_N^2$. A unit $i$ with small $\sigma_i$ is easy to classify, whereas a unit $i$ with large $\sigma_i$ is difficult to classify. The individual error variances may depend on $N$ and $T$ and may diverge as $T \to \infty$. We expect our asymptotic framework to be a more faithful approximation of the finite sample behavior of the estimators than the conventional framework.

For \emph{kmeans}-estimation, we show that uniform consistency of group memberships holds provided that the unit-specific error variances do not diverge too fast. Units $i$ for which $\sigma_i$ diverges too fast are potentially misclassified in the limit. However, if the proportion of such potentially misclassified units is sufficiently small then it is still possible to estimate the group-specific coefficients at a root $NT$ rate.

\textcite{pollard1981strong,pollard1982central,bonhomme2015grouped} consider panel models with fixed $T$ and estimate cluster-specific coefficients. They show that the cluster-specific coefficients converge to a pseudo-true value at rate root $N$ even though units are misclassified in the limit with positive probability. Their setting and results are distinct from ours. We consider long panel asymptotics under which true rather than pseudo-true cluster-specific coefficients can be identified and estimated at a root $NT$ rate.

We prove our results for a simple linear panel model with group-specific intercepts and focus on estimation by least squares (equivalent to \emph{kmeans}). By focusing on this simple model we are able to derive our results under interpretable and intuitive conditions on the structure of heteroscedasticity. While we think that our argument can be extended to more general regression models with group-specific coefficients, we believe that such an exercise would impose more involved assumptions and would not be as instructive about the mechanisms that allow root $NT$-consistency to arise despite of diverging error variances and possibly misclassified units.

\textcite{bonhomme2015grouped} conduct a simulation experiment that is calibrated to their empirical application. They find that the group-specific coefficients are estimated precisely, even though it is likely that one or more units are misclassified. Existing theoretical results about the rate of consistency of the group-specific coefficients cannot explain this phenomenon as they do not apply in the presence of misclassification. We fill this gap in the literature by showing that uniform consistency is sufficient but not necessary for precise estimation of the group-specific coefficients.

\section{Setting}

The units $i = 1, \dotsc, N$ are partitioned into $G$ groups. The set of all groups is $\mathbb{G} = \{1, \dotsc, G\}$ and unit $i$ belongs to group $g_i^0 \in \mathbb{G}$. 
For units in group $g \in \mathbb{G}$ the mean outcome in each period is given by $\mu_g$.
At time $t = 1, \dotsc, T$ we observe the scalar outcome $y_{it}$ generated by 
\begin{align*}
	y_{it} = \mu_{g_i^0} + \sigma_i v_{it},
\end{align*}
where $v_{it}$ is a noise term with variance one. 
Let $\Gamma$ denote the space of possible group assignments $\mathbf{g} = (g_1, \dotsc, g_N)$ and let $\mathcal{M}$ denote the space of possible group-specific means $\boldsymbol{\mu} = (\mu_1, \dotsc, \mu_G)$. The true group assignment $\mathbf{g}^0 \in \Gamma$ and the true group-specific mean $\boldsymbol{\mu}^0 \in \mathcal{M}$ are unknown parameters and are estimated. 

We consider \textit{kmeans}-type estimation as suggested in \textcite{bonhomme2015grouped}.
The objective function for estimation is defined on $\Gamma \times \mathcal{M}$ and is given by 
\begin{align*}
	Q_{N, T}(\mathbf{g}, \boldsymbol{\mu}) = 
	\frac{1}{NT} \sum_{i = 1}^N \sumtT
	\left(
		y_{it} - \mu_{g_i}
	\right)^2. 
\end{align*}
The estimator is defined as $(\hat{\boldsymbol{\mu}}, \hat{\mathbf{g}})= \arg \min_{ \boldsymbol{\mu} \in \mathcal{M}, \boldsymbol{g} \in \Gamma} Q_{N, T}(\mathbf{g}, \boldsymbol{\mu})$.
In practice, the estimator is computed by the iterative \emph{kmeans} procedure. We start with an initial group membership structure $\mathbf{g}^{(0)}$ and then iterate $\boldsymbol{\mu}$ and $\mathbf{g}$ such that the $s$-th iteration sets $\boldsymbol{\mu}^{(s)} = \arg \min_{ \boldsymbol{\mu} \in \mathcal{M}} Q_{N, T}(\mathbf{g}^{(s-1)}, \boldsymbol{\mu})$ and $\mathbf{g}^{(s)} = \arg \min_{ \boldsymbol{g} \in \Gamma} Q_{N, T}(\mathbf{g}, \boldsymbol{\mu}^{(s)})$ until convergence. Since the iteration may converge to a local minimum we re-start the procedure from many initial values for $\mathbf{g}$.

\section{Main results}
We consider asymptotic sequences under which $N, T \to \infty$ and 
\begin{align}
	\label{eq:asymptotic_sequence}
	\frac{(\log T) \sqrt{\log N}}{\sqrt{T}} = o(1).
\end{align}
We treat $(\sigma_1, \dotsc, \sigma_N)$ and $\mathbf{g}^0$ as unobserved deterministic parameters.

We first state sufficient conditions for consistent estimation of $\boldsymbol\mu^0$. 
\begin{assumption}
\label{as:minimal}
\begin{enumerate}[label=\roman*)]
\item 
\label{as:minimal:independence}
$\{v_{it}\}_{t=1}^T$ is an independent sequence with $\E v_{it} = 0$ and $\E v_{it}^2 = 1$. 
\item 
\label{as:minimal:average_variance}
The average error variance satisfies
	$N^{-1} \sum_{i = 1}^N \sigma_i^2 = o (T)$.
\item 
\label{as:minimal:bounded_set}
There is a bounded set $\mathcal{M} \subset{\mathbb{R}}^G$ such that $\boldsymbol \mu^0 \in \mathcal{M}$.
\item 
\label{as:minimal:group_sep}
There is a positive constant $M_{G}$ such that 
\[
	\min_{g \in \mathbb{G}} \min_{h \in \mathbb{G} \setminus \{g\}} 
	\abs{\mu_g^0 - \mu^0_h} > M_G.
\]
\item 
\label{as:minimal:pi_min}
For all $g \in \mathbb{G}$,
$
		N^{-1} \sum_{i = 1}^N 1 (g^0_i = g) \geq q_{\min}
$.
\end{enumerate}
\end{assumption}
Part~\ref{as:minimal:independence} imposes independence of the error term over time. 
Using this assumption we obtain asymptotic results under simply conditions on between-unit heteroscedasticity. 
The assumption can be relaxed to allow for weak serial correlation at the expense of conditions on heteroscedasticity that are more difficult to interpret. 
Part~\ref{as:minimal:average_variance} states that the average error variance increases at a slower rate than $T$. This assumption ensures that, as $T \to \infty$, the additional information from observing more time periods is not undone by an increased noisiness of the signal. 
Part~\ref{as:minimal:bounded_set} is a standard regularity assumption.
Part~\ref{as:minimal:group_sep} requires that the group-specific means are distinct (group separation).  
Part~\ref{as:minimal:pi_min} ensures that the effective sample size that can be used to estimate the group-specific mean grows at the same asymptotic rate for all groups.

Assumption~\ref{as:minimal} does not restrict cross-sectional dependence. Assumption~\ref{as:normality} below limits the amount of cross-sectional dependence and is required for our result on $NT$-convergence of the group-specific parameters, but not any of our intermediate results. 

The grouped model is invariant to a relabeling of the groups and the vector of group-specific means $\boldsymbol\mu^0$ is therefore only identified up to a re-ordering of its components. The following result states that the identified set is consistently estimated. 

\begin{lemma}[Consistency of group-specific means]
\label{lem:mu_consistency}
	Suppose that Assumption~\ref{as:minimal} holds. Then, there is a (possibly random) permutation function $\pi: \mathbb{G} \to \mathbb{G}$ such that for all $\epsilon > 0$
	\begin{align*}
		\lim_{N, T \to \infty} P \left( \max_{g \in \mathbb{G}} \abs{\hat{\mu}_{\pi (g)} - \mu_g^0} > \epsilon \right) = 0.
	\end{align*}
\end{lemma}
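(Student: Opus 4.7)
The plan is to concentrate out the group-membership vector $\mathbf{g}$ and reduce the problem to a standard low-dimensional extremum-estimator argument for $\boldsymbol{\mu}$. For any $\boldsymbol{\mu}$ the optimal assignment is $\tilde{g}_i(\boldsymbol{\mu}) = \arg\min_{g\in\mathbb{G}} (\bar{y}_i - \mu_g)^2$ with $\bar{y}_i = T^{-1}\sum_t y_{it}$; substituting this back and using $\sum_t (y_{it} - \bar{y}_i) = 0$ removes the within-unit sum of squares, which is $\boldsymbol{\mu}$-free, so $\hat{\boldsymbol{\mu}}$ equivalently minimizes
\[
R_{N,T}(\boldsymbol{\mu}) = \frac{1}{N}\sum_{i=1}^N \min_{g\in\mathbb{G}} (\bar{y}_i - \mu_g)^2.
\]
Its natural deterministic analogue, $R^*(\boldsymbol{\mu}) = N^{-1}\sum_i \min_g (\mu_{g_i^0} - \mu_g)^2 = \sum_{h\in\mathbb{G}} \hat{q}_h \min_g (\mu_h^0 - \mu_g)^2$ with $\hat{q}_h = N^{-1}\#\{i : g_i^0 = h\}$, is identifiably unique on $\mathcal{M}$: since $\hat{q}_h \geq q_{\min} > 0$ by Assumption~\ref{as:minimal}~\ref{as:minimal:pi_min} and the $\mu_h^0$ are separated by $M_G$ under Assumption~\ref{as:minimal}~\ref{as:minimal:group_sep}, the zero set of $R^*$ consists exactly of the $G!$ permutations of $\boldsymbol{\mu}^0$, and $R^*(\boldsymbol{\mu})$ is bounded below by a positive constant whenever $\boldsymbol{\mu}\in\mathcal{M}$ is bounded away from all of them.

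The main technical step is uniform convergence of $R_{N,T}$ to $R^*$. Using the elementary bound $|\min_g a_g^2 - \min_g b_g^2| \leq \max_g |a_g^2 - b_g^2|$ with $a_g = \bar{y}_i - \mu_g$ and $b_g = \mu_{g_i^0} - \mu_g$, and writing $\bar{y}_i - \mu_{g_i^0} = \sigma_i \bar{v}_i$, a direct expansion yields
\[
\sup_{\boldsymbol{\mu}\in\mathcal{M}} |R_{N,T}(\boldsymbol{\mu}) - R^*(\boldsymbol{\mu})| \leq \frac{1}{N}\sum_{i=1}^N \sigma_i^2 \bar{v}_i^2 + \frac{C}{N}\sum_{i=1}^N \sigma_i |\bar{v}_i|,
\]
where $C$ depends only on the diameter of $\mathcal{M}\cup\{\mu^0_1,\ldots,\mu^0_G\}$, which is finite by Assumption~\ref{as:minimal}~\ref{as:minimal:bounded_set}. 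Independence over $t$ (Assumption~\ref{as:minimal}~\ref{as:minimal:independence}) gives $\E \bar{v}_i^2 = 1/T$, so the expectation of the quadratic term equals $(NT)^{-1}\sum_i \sigma_i^2 = o(1)$ by Assumption~\ref{as:minimal}~\ref{as:minimal:average_variance}; Jensen's inequality followed by Cauchy--Schwarz bounds the expectation of the linear term by $(N^{-1}\sum_i \sigma_i^2)^{1/2}/\sqrt{T} = o(1)$ under the same condition. Markov's inequality then delivers uniform convergence in probability.

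The conclusion is standard. For any relabelling $\pi$ of $\mathbb{G}$ the pointwise bound above gives $R_{N,T}(\mu^0_{\pi(1)},\ldots,\mu^0_{\pi(G)}) = o_p(1)$; since $\hat{\boldsymbol{\mu}}$ is a minimizer, the same holds for $R_{N,T}(\hat{\boldsymbol{\mu}})$, and uniform convergence transfers this to $R^*(\hat{\boldsymbol{\mu}}) = o_p(1)$. Identifiable uniqueness then forces $\hat{\boldsymbol{\mu}}$ to lie within any prescribed $\epsilon$-neighbourhood of some permutation of $\boldsymbol{\mu}^0$ with probability tending to one, and selecting for each realisation the (data-dependent) permutation that matches components delivers the $\pi$ required in the statement. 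The step I expect to require the most care is the uniform bound above, since the individual $\sigma_i$ may diverge arbitrarily and the argument relies crucially on the average-variance condition $N^{-1}\sum_i \sigma_i^2 = o(T)$ rather than on any uniform bound on the $\sigma_i$; everything else reduces to a textbook extremum-estimator argument over a compact, finite-dimensional parameter space.
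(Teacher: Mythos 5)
Your proof is correct, but it takes a genuinely different route from the paper's. You profile out the group assignments: because the model has only a group-specific intercept, the within-unit decomposition $\sum_t (y_{it}-\mu_g)^2 = \sum_t (y_{it}-\bar y_i)^2 + T(\bar y_i-\mu_g)^2$ collapses the problem to ordinary \emph{kmeans} on the unit-level time averages $\bar y_i$, and you then run a textbook extremum-estimator argument on the $G$-dimensional profile criterion $R_{N,T}$, with uniform convergence to $R^*$ controlled by exactly the average-variance condition of Assumption~\ref{as:minimal}\ref{as:minimal:average_variance}. The paper instead keeps the full objective $Q_{N,T}(\mathbf{g},\boldsymbol\mu)$ over $\Gamma\times\mathcal{M}$, shows uniform convergence of it to $\frac{1}{NT}\sum_{i,t}u_{it}^2+\frac1N\sum_i(\mu^0_{g_i^0}-\mu_{g_i})^2$ (Lemma~\ref{lem:Q_tilde}), deduces $\frac1N\sum_i(\mu^0_{g_i^0}-\hat\mu_{\hat g_i})^2=o_p(1)$ (Lemma~\ref{lem:muhat_MSEconv}), and then runs a contradiction argument using $q_{\min}$ and $M_G$; the two final steps are essentially the same pigeonhole construction of the permutation $\pi$, and your $R^*(\hat{\boldsymbol\mu})$ is bounded above by the paper's mean-squared fit, so the quantities being driven to zero are closely related. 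What your route buys is economy: a clean separation of the randomness into the single statistic $\bar v_i$ per unit, so the stochastic part of the argument is two moment bounds. What the paper's route buys is generality: the profiling identity you rely on is special to the intercept-only model and would not survive the addition of regressors with group-specific slopes, whereas the $Q_{N,T}$-based argument is the template that extends (as in the grouped-fixed-effects literature). Two small points worth tightening in a write-up: the lower bound on $R^*$ away from the permutation orbit must be uniform in $N$ (it is, via $R^*(\boldsymbol\mu)\ge q_{\min}\max_h\min_g(\mu^0_h-\mu_g)^2$ and the separation constant $M_G$, rather than via compactness of $\mathcal{M}$ alone, since $R^*$ depends on $N$ through the empirical group shares); and for the comparison step you only need $R_{N,T}(\boldsymbol\mu^0)=o_p(1)$ at the identity labelling, since other permutations of $\boldsymbol\mu^0$ need not lie in $\mathcal{M}$.
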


Similarly to related results in the literature \parencite[e.g.][]{bonhomme2015grouped}, proving this result does not require establishing that group memberships are consistently estimated for all units. 
In Theorem~\ref{th:mu-consistent} below, we strengthen the result to root $NT$ convergence under weaker assumptions on heteroscedasticity than are commonly assumed in the literature.

The subsets of units for which we can guarantee that group memberships are uniformly consistently estimated is given by 
\begin{align}
\label{eq:I_N_T}
	\mathcal{I}_{N, T} = \left\{ 
		i \in \{1, \dotsc, N\} : \sigma_i \leq \frac{M_G}{140} \sqrt{\frac{T}{\log N}}
	\right\}.  
\end{align}
For the units in $\mathcal{I}_{N,T}$ the error variances are allowed to diverge but only at rate $\sqrt{T/\log N}$. Controlling the rate of divergence is necessary to ensure that observing additional time periods adds enough information to estimate group memberships precisely. What rates of divergence are permissible is determined by bounds on the tail of the error distribution. 
The error term of our panel model is given by $\sigma_i v_{it}$. 
We assume that $v_{it}$ is a sub-exponential random variable.
Under this assumption, new observations add information at the usual parametric rate root $T$ and the price of uniformity is root $\log N$. 

\begin{assumption}[Sub-exponential errors]
\label{as:subexponential}
There are positive constants $\nu, \alpha$ such that 
\begin{align*}
	\max_{1 \leq i \leq N} \max_{1 \leq t \leq T} \E \exp(\lambda \abs{v_{it}}) \leq \exp\left(\frac{\lambda^2 \nu^2}{2}\right) 
	\qquad \text{for all $\lambda > 0$ such that $\lambda < \frac{1}{\alpha}$}.
\end{align*}
\end{assumption}
In addition to errors that are Gaussian and sub-Gaussian (conditional on $\sigma_i$) this assumption allows also for certain ``fat-tailed'' distributions such as Poisson or chi-squared. It is possible to relax this assumption and allow for distributions with even heavier tails, but only at the expense of a different rate condition in \eqref{eq:misclassified_rate} that is more difficult to state and to interpret. 
In our setting, misclassification can occur even for moderate realizations of $v_{it}$ if $\sigma_i$ is sufficiently large. Therefore, misclassification does not hinge on heavy tails of $v_{it}$ and is not ruled out or limited by Assumption~\ref{as:subexponential}. 

The following lemma states that group membership is estimated consistently uniformly over all units in $\mathcal{I}_{N,T}$.
\begin{lemma}
\label{lem:gi_consistency}
Suppose that Assumptions~\ref{as:minimal} and \ref{as:subexponential} hold. 
Then, 
there exists a (possibly random) permutation function $\pi: \mathbb{G} \to \mathbb{G}$ such that 
\begin{align*}
\lim_{N, T \to \infty} P \left( 
\sup_{i \in \mathcal{I}_{N, T}} 
\abs{\pi(\hat{g}_i) - g_i^0 } > 0
\right) \to 0.
\end{align*}
\end{lemma}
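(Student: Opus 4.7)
The plan is to exploit Lemma~\ref{lem:mu_consistency} to replace $\hat\mu$ with $\mu^0$ up to a small error, reduce misclassification of each unit $i$ to a statement about the noise average $\bar v_i := T^{-1}\sum_{t=1}^T v_{it}$, and control $\sigma_i \abs{\bar v_i}$ uniformly over $i \in \mathcal{I}_{N,T}$ via Bernstein's inequality combined with a union bound. Let $\pi$ denote the permutation from Lemma~\ref{lem:mu_consistency}, set $\tilde\mu_g := \hat\mu_{\pi^{-1}(g)}$, fix a constant $c_1 > 2$ (calibrated below), and let $E := \{\max_g \abs{\tilde\mu_g - \mu_g^0} \leq M_G/c_1\}$. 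Lemma~\ref{lem:mu_consistency} gives $P(E) \to 1$, so it suffices to control the misclassification event on $E$.

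On $E$, a direct expansion using $y_{it} = \mu_{g_i^0}^0 + \sigma_i v_{it}$, together with cancellation of the $T^{-1}\sum_t v_{it}^2$ terms, yields the factorization
\begin{equation*}
	\frac{1}{T}\sum_{t=1}^T \bigl[(y_{it} - \tilde\mu_h)^2 - (y_{it} - \tilde\mu_{g_i^0})^2\bigr]
	= (a_h + b)\bigl[(a_h - b) + 2\sigma_i \bar v_i\bigr],
\end{equation*}
with $a_h := \mu_{g_i^0}^0 - \tilde\mu_h$ and $b := \tilde\mu_{g_i^0} - \mu_{g_i^0}^0$. On $E$, Assumption~\ref{as:minimal}\ref{as:minimal:group_sep} gives $\abs{a_h} \geq M_G(1 - 1/c_1)$ while $\abs{b} \leq M_G/c_1$; hence $a_h + b$ and $a_h - b$ share the sign of $a_h$ and each has absolute value at least $M_G(1 - 2/c_1)$. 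Misclassification of $i$ requires the displayed quantity to be non-positive for some $h \neq g_i^0$, and by the factorization this forces $\abs{\sigma_i \bar v_i} \geq c_\star M_G$ with $c_\star := (1 - 2/c_1)/2 > 0$.

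Next I would apply Bernstein's inequality. By Assumptions~\ref{as:minimal}\ref{as:minimal:independence} and \ref{as:subexponential}, $T \bar v_i$ is a sum of $T$ independent mean-zero sub-exponential random variables with parameters $(\nu,\alpha)$, so
\begin{equation*}
	P\bigl(\abs{\bar v_i} > t\bigr) \leq 2\exp\bigl(-\min\{T t^2/(2\nu^2),\, T t/(2\alpha)\}\bigr)
\end{equation*}
for all $t > 0$. Setting $t = c_\star M_G/\sigma_i$ and using $\sigma_i \leq (M_G/140)\sqrt{T/\log N}$ for $i \in \mathcal{I}_{N,T}$, the first term in the min is at least $c_\star^2 \cdot 140^2 \log N/(2\nu^2) =: A \log N$, while the second grows like $\sqrt{T\log N}$; for $N, T$ large the first is binding. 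Calibrating $c_1$ (e.g., $c_1 = 10$, giving $c_\star = 2/5$) makes $A > 1$. A union bound over the at most $N$ units in $\mathcal{I}_{N,T}$ and the $G - 1$ competing labels then yields
\begin{equation*}
	P\bigl(\sup_{i \in \mathcal{I}_{N,T}} \abs{\pi(\hat{g}_i) - g_i^0} > 0\bigr) \leq P(E^c) + 2 G N^{1 - A} \to 0,
\end{equation*}
as required.

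The main obstacle is the clean factorization of the objective difference: a direct bound on the expansion $(\delta_h - \epsilon_h)^2 - \epsilon_{g_i^0}^2 + 2\sigma_i\bar v_i(\delta_h - \epsilon_h + \epsilon_{g_i^0})$ would pick up the diameter of $\mathcal{M}$ and weaken the calibration of the constant $140$. The factorization above instead reduces the misclassification condition to $\abs{\sigma_i\bar v_i} \geq c_\star M_G$ with $c_\star$ depending only on $c_1$, and the calibration of $140$ against the fixed sub-exponential parameter $\nu$ then goes through. Verifying that the Gaussian branch of Bernstein's inequality dominates only requires that $t = c_\star M_G/\sigma_i$ be not too large, which is ensured for the critical units with $\sigma_i$ near the upper bound by the rate condition \eqref{eq:asymptotic_sequence}.
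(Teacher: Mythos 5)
Your reduction of the misclassification event is sound and essentially the paper's: you condition on the event that $\hat{\boldsymbol\mu}$ is within a fixed fraction of $M_G$ of $\boldsymbol\mu^0$ (Lemma~\ref{lem:mu_consistency}), show that misclassifying unit $i$ forces $\abs{\bar v_i} \geq c_\star M_G/\sigma_i$, and finish with a union bound over $i \in \mathcal{I}_{N,T}$. Your factorization $(a_h+b)\bigl[(a_h-b)+2\sigma_i\bar v_i\bigr]$ is correct and in fact cleaner than the paper's triangle-inequality computation, which only extracts the weaker threshold $M_G\sqrt{T}/(10\sigma_i)$.

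The gap is in the concentration step. The sub-exponential Bernstein bound you invoke has its Gaussian branch governed by the variance \emph{proxy} $\nu^2$ from Assumption~\ref{as:subexponential}, not by the true variance $\E v_{it}^2 = 1$. Your exponent is $A\log N$ with $A = 140^2 c_\star^2/(2\nu^2)$, and since $c_\star = (1-2/c_1)/2 < 1/2$ for every admissible $c_1$, you have $A < 2450/\nu^2$ no matter how you calibrate $c_1$. Assumption~\ref{as:subexponential} places no upper bound on $\nu$, so for $\nu \gtrsim 50$ your claim that ``calibrating $c_1$ makes $A>1$'' fails, $2GN^{1-A}$ diverges, and the proof does not establish the lemma for the set $\mathcal{I}_{N,T}$ as defined in \eqref{eq:I_N_T} with the fixed constant $140$. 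The paper avoids this by proving Lemma~\ref{lem:subexponential} via Theorem~3.4 of \textcite{kuchibhotla2018moving}: there the leading term of the deviation bound is $7\sqrt{2\log N/T}$ scaled by the \emph{actual} standard deviation ($\Gamma = 1$ by Assumption~\ref{as:minimal}\ref{as:minimal:independence}), while the Orlicz norm $K$ (which absorbs $\nu$ and $\alpha$) multiplies only the correction term $K\log(2T)\log N/T$, which is $o\bigl(\sqrt{\log N/T}\bigr)$ under the rate condition \eqref{eq:asymptotic_sequence}. That is precisely what lets a $\nu$-free constant ($14 = 140/10$) appear in \eqref{eq:I_N_T}. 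To repair your argument you would need a Bernstein- or Fuk--Nagaev-type inequality whose sub-Gaussian branch is driven by $\E v_{it}^2 = 1$ with the $\psi_1$-norm relegated to a lower-order term, and then check that the lower-order term is negligible using \eqref{eq:asymptotic_sequence}. (A minor additional point: Assumption~\ref{as:subexponential} bounds the moment generating function of $\abs{v_{it}}$, so passing to a two-sided tail bound for the centered $v_{it}$ costs further constants, which only worsens $A$.)
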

This lemma extends existing results in the literature that are derived under the assumption that $\max_{1 \leq i \leq N} \sigma_i^2$ is bounded in which case $\mathcal{I}_{N,T} = \{1, \dotsc, N\}$ eventually. 
Lemma~\ref{lem:gi_consistency} shows that uniform consistency over all units can be obtained even if the error variance $\sigma_i^2$ diverges for some or all units. In this case, all unit-specific error variances must diverge at most at the rate given in \eqref{eq:I_N_T} and the average error variance must diverge at most at the rate given in Assumption~\ref{as:minimal}\ref{as:minimal:average_variance}.

We study the asymptotic behavior of $\hat{\boldsymbol{\mu}}$ without requiring that all units are contained in $\mathcal{I}_{N,T}$ and therefore guaranteed to be estimated consistently. 
The idea of Theorem~\ref{th:mu-consistent} below is that units that are not in $\mathcal{I}_{N,T}$ do not affect the asymptotic distribution provided that there are sufficiently few of them. 

Let $\mathcal{I}_{N, T}^\mathsf{c} = \{1, \dotsc, N\} \setminus \mathcal{I}_{N, T}$ and write $\# A$ to denote the cardinality of a set $A$. 
We assume 
\begin{align}
	\label{eq:misclassified_rate}
	\frac{\# \mathcal{I}_{N,T}^{\mathsf{c}}}{N} 
	\max \left\{
		\sqrt{N T} , \sqrt{ N
			\frac{1}{\# \mathcal{I}_{N,T}^{\mathsf{c}}} \sum_{i \in \# \mathcal{I}_{N,T}^{\mathsf{c}}} \sigma_i^2}
		\right\}
	= o (1).
\end{align}
Existing theoretical results cover only settings under which no units are potentially misclassified in the asymptotic limit, i.e.,  $\# \mathcal{I}_{N,T}^{\mathsf{c}} = 0$. In this case \eqref{eq:misclassified_rate} is trivially satisfied.
Our result allows $\# \mathcal{I}_{N,T}^{\mathsf{c}} \neq 0$ provided that the proportion of possibly misclassified units ${\# \mathcal{I}_{N,T}^{\mathsf{c}}}/{N}$ vanishes at a sufficiently fast rate. The rate in the first component of the max ensures that units in $\mathcal{I}_{N,T}^{\mathsf{c}}$ asymptotically do not affect the mean of $\hat{\boldsymbol{\mu}}$. The rate of the second component in the max ensures that units in $\mathcal{I}_{N,T}^{\mathsf{c}}$ asymptotically do not affect the variance of $\hat{\boldsymbol{\mu}}$. By \eqref{eq:I_N_T}, the second component satisfies 
\begin{align*}
\sqrt{ N
			\frac{1}{\# \mathcal{I}_{N,T}^{\mathsf{c}}} \sum_{i \in \# \mathcal{I}_{N,T}^{\mathsf{c}}} \sigma_i^2}
> \sqrt{NT} \frac{ M_G }{140 \sqrt{\log N}}.
\end{align*}
This shows that the first component can dominate the second component at most at a root $\log N$ rate. Therefore, replacing the max in \eqref{eq:misclassified_rate} by the second component gives a good approximation (up to order root $\log N$) of the required rate condition.

To state the assumption for asymptotic normality of $\hat{\mu}_g$, $g \in \mathbb{G}$, let $\mathcal{I}_{N, T}(g) = \left\{i \in \mathcal{I}_{N, T} : g_i^0 = g \right\}$ and 
\begin{gather*}
	\tilde{N}_g =  \# \{ i \in \mathcal{I}_{N,T} : g_i^0 = g\}, \quad
	N_g =  \# \{ i \in 1, \dotsc, N: g_i^0 = g\}, \quad 
	\hat{N}_g =  \# \{ i \in 1, \dotsc, N : \hat{g}_i = g\}.
\end{gather*}

\begin{assumption} 
	\label{as:normality}
\begin{enumerate}[label = \roman*)]
	\item 
	\label{as:normality:inconsistent}
	Condition~\eqref{eq:misclassified_rate} is satisfied.
	\item 
	\label{as:normality:delta}
	 For each $g \in \mathbb{G}$ there are positive constants $\delta_g$ and $q_g$ such that $N_g/N \to q_g$ and  
\begin{align*}
	\frac{1}{\tilde{N}_g} \sum_{i \in \mathcal{I}_{N, T} (g)} \sigma_i^2 
	+ 
	\frac{1}{\tilde{N}_g} \sum_{\substack{i, j \in \mathcal{I}_{N, T} (g) \\i \neq j}} \sigma_i \sigma_j \cov(v_{i1}, v_{j1}) 
	\to \delta_g.
\end{align*}
	\item 
	\label{as:sigma_Lp_norm}
	We have 
	\begin{align*}
		\frac{1}{\# \mathcal{I}_{N,T}} \sum_{i \in \mathcal{I}_{N,T}} \sigma_i^2 = O (\sqrt{T})
		\quad \text{and} \quad 
		\frac{1}{\# \mathcal{I}_{N,T}} \sum_{i \in \mathcal{I}_{N,T}} \sigma_i^4 = O (N T).
	\end{align*}
	\item 
	\label{as:normality:limit-cs-depend}
	In addition, 
	\begin{align*}
		\sum_{\substack{i, j, k \in \mathcal{I}_{N, T}\\ \{i\} \cap \{j\} \cap \{k\} = \emptyset}}
		\sigma_i \sigma_j \sigma_k 
		\E [v_{i1}^2 v_{j1} v_{k1}]
		= & O (N^2 T), 
		\\
		\sum_{\substack{i, j, k, \ell \in \mathcal{I}_{N, T} \\ \{i\} \cap \{j\} \cap \{k\} \cap \{\ell\}= \emptyset}} 
		\sigma_i \sigma_j \sigma_k \sigma_\ell
		\E [v_{i1} v_{j1} v_{k1} v_{\ell 1}]
		= & O (N^2 T).
	\end{align*}
\end{enumerate}
\end{assumption}
Part~\ref{as:normality:delta} ensures that the asymptotic variance of $\hat{\mu}_g$ converges.
Part~\ref{as:sigma_Lp_norm} imposes two conditions on the rate of divergence of the $L_2$ and the $L_4$ norm of $\{\sigma_i : i \in \mathcal{I}_{N,T}\}$. Under cross-sectional independence the first condition is implied by \ref{as:normality:delta}. The second condition is satisfied if 
$
	N \log^2 N /T  \to \infty.
$
Part~\ref{as:normality:limit-cs-depend} limits the amount of cross-sectional dependence.

The following theorem guarantees root $NT$-consistency and asymptotic normality of $\hat{\mu}_g$.
\begin{theorem}
\label{th:mu-consistent}	
Suppose that Assumptions~\ref{as:minimal}--\ref{as:normality} hold. 
Then, for $g \in \mathcal{\mathbb{G}}$ as $N,T \to \infty$
\begin{align*} 
	\sqrt{N T} \left( \hat{\mu}_{\pi(g)} - \mu_g^0 \right) 
	\overset{d}{\longrightarrow} \mathcal{N} (0,  q_g^{-1} \delta_g) .
\end{align*}
\end{theorem}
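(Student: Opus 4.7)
The plan is to work on the high-probability event $E_{N,T} = \{\hat{g}_i = \pi(g_i^0) \text{ for all } i \in \mathcal{I}_{N,T}\}$ provided by Lemma~\ref{lem:gi_consistency}. On $E_{N,T}$, the closed-form first-order condition for $\hat{\mu}_{\pi(g)}$ yields the decomposition
\begin{align*}
  \hat{\mu}_{\pi(g)} - \mu_g^0
  &= \frac{1}{T \hat{N}_{\pi(g)}} \sum_{i \in \mathcal{I}_{N,T}(g)} \sumtT \sigma_i v_{it}
  + \frac{1}{T \hat{N}_{\pi(g)}} \sum_{\substack{i \in \mathcal{I}_{N,T}^{\mathsf{c}} \\ \hat{g}_i = \pi(g)}} \sumtT \sigma_i v_{it} \\
  &\quad + \frac{1}{\hat{N}_{\pi(g)}} \sum_{\substack{i \in \mathcal{I}_{N,T}^{\mathsf{c}} \\ \hat{g}_i = \pi(g)}} (\mu_{g_i^0}^0 - \mu_g^0),
\end{align*}
since units in $\mathcal{I}_{N,T}$ assigned to $\pi(g)$ satisfy $g_i^0 = g$ and so contribute zero bias. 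As a preliminary, I would establish $\hat{N}_{\pi(g)}/N \xrightarrow{P} q_g$: on $E_{N,T}$ the count $\hat{N}_{\pi(g)}$ differs from $N_g$ by at most $\#\mathcal{I}_{N,T}^{\mathsf{c}}$, which is $o(N)$ by \eqref{eq:misclassified_rate}, and $N_g/N \to q_g$ by Assumption~\ref{as:normality}\ref{as:normality:delta}.

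The last two summands are controlled directly by the two components of the max in \eqref{eq:misclassified_rate}. The bias summand is bounded in absolute value by $C \#\mathcal{I}_{N,T}^{\mathsf{c}}/\hat{N}_{\pi(g)}$ using boundedness of $\mathcal{M}$; multiplying by $\sqrt{NT}$ gives precisely the first component. For the auxiliary noise summand, the key trick is to drop the data-dependent indicator $\mathbf{1}(\hat{g}_i = \pi(g))$ by passing to absolute values and then use $\E \abs{\sumtT \sigma_i v_{it}} \le \sigma_i \sqrt{T}$ (which follows from time-independence and Cauchy--Schwarz); Markov's inequality yields a bound of order $N^{-1/2} \sum_{i \in \mathcal{I}_{N,T}^{\mathsf{c}}} \sigma_i$ for $\sqrt{NT}$ times this summand, and a further Cauchy--Schwarz step converts this into exactly the second component of the max. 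Both summands therefore contribute $o_P(1)$ to $\sqrt{NT}(\hat{\mu}_{\pi(g)} - \mu_g^0)$.

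For the first (main) summand, Slutsky with $\hat{N}_{\pi(g)} = Nq_g(1 + o_P(1))$ reduces the claim to asymptotic normality of
\begin{align*}
  \frac{1}{q_g \sqrt{T}} \sumtT \frac{W_t}{\sqrt{N}}, \qquad \text{where } W_t = \sum_{i \in \mathcal{I}_{N,T}(g)} \sigma_i v_{it}.
\end{align*}
Under Assumption~\ref{as:minimal}\ref{as:minimal:independence} the sequence $\{W_t\}$ is independent across $t$, and Assumption~\ref{as:normality}\ref{as:normality:delta} implies that the variance of $W_1/\sqrt{N}$ converges to $q_g \delta_g$. I would then verify a Lyapunov-type fourth-moment condition: expanding $\E W_1^4$ produces a ``diagonal'' contribution controlled by Assumption~\ref{as:normality}\ref{as:sigma_Lp_norm} together with the sub-exponential bounds of Assumption~\ref{as:subexponential}, and ``off-diagonal'' third and fourth cross-moments controlled by Assumption~\ref{as:normality}\ref{as:normality:limit-cs-depend}. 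The CLT for row-independent triangular arrays then delivers the $\mathcal{N}(0, q_g^{-1} \delta_g)$ limit.

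The main obstacle is the CLT for the main summand under cross-sectional dependence, for which the only structural handle is the specific higher-moment bounds in Assumption~\ref{as:normality}\ref{as:normality:limit-cs-depend}; this essentially forces a moment-based argument rather than a more flexible mixing-based CLT. A secondary delicate point is the control of the auxiliary noise from $\mathcal{I}_{N,T}^{\mathsf{c}}$: Lemma~\ref{lem:gi_consistency} provides no information about $\hat{g}_i$ for $i \in \mathcal{I}_{N,T}^{\mathsf{c}}$, so one is forced into a crude absolute-value bound, and it is precisely this step that motivates the rate condition \eqref{eq:misclassified_rate}.
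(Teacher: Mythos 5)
Your proposal follows essentially the same route as the paper's proof: the same three-way decomposition of $\hat{\mu}_{\pi(g)} - \mu_g^0$ into the main noise term over $\mathcal{I}_{N,T}(g)$ plus bias and noise contributions from $\mathcal{I}_{N,T}^{\mathsf{c}}$, the same use of the two components of the max in \eqref{eq:misclassified_rate} to make the latter $o_P((NT)^{-1/2})$, the preliminary step $\hat{N}_{\pi(g)}/N \to q_g$, and a fourth-moment (Lindeberg/Lyapunov) CLT for the main term whose cross-moment terms are controlled exactly by Assumption~\ref{as:normality}\ref{as:sigma_Lp_norm}--\ref{as:normality:limit-cs-depend}. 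The only cosmetic difference is your bound on the $\mathcal{I}_{N,T}^{\mathsf{c}}$ noise term via $\E\lvert\sum_t \sigma_i v_{it}\rvert \leq \sigma_i\sqrt{T}$ followed by Cauchy--Schwarz on $\sum_i\sigma_i$, where the paper applies Cauchy--Schwarz across units and controls $\#\mathcal{I}^{\mathsf{c}}{}^{-1}\sum_i(T^{-1/2}\sum_t v_{it})^2 = O_p(1)$ by Markov; both yield the same order.
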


This result shows that root $NT$-consistency can be obtained even if some units are potentially misclassified in the limit. 
In addition, the error variance for the units that are consistently estimated need not be bounded. For root $NT$-consistency we require a stronger assumption on the average error variance than for the result on consistent estimation of group memberships in Lemma~\ref{lem:gi_consistency}. Assumption~\ref{as:normality}\ref{as:normality:delta} implies that the average error variance is bounded. In contrast, Lemma~\ref{lem:gi_consistency} allows the average error variance to diverge at a controlled rate.

\section{Conclusion}

We have shown that uniformly consistent estimation of group memberships is not a necessary condition of root $NT$ estimation of time invariant group-specific parameters. The simple model with group-specific intercepts served our purpose of providing an example of a grouped panel model in which a root $NT$ rate can be obtained even under misclassification in the limit.
We are confident that similar results can be obtained for general linear panel regression, albeit under more involved conditions that may not be straightforward to interpret. We leave such extensions to future research. For scenarios where the amount of misclassification permitted by our assumption \eqref{eq:misclassified_rate} is exceeded by only a sufficiently small margin, our proofs suggest that it is possible to obtain a convergence rate that is slower than root $NT$ but faster than root $N$. This suggests a negative relationship between the difficulty of classifying individual units and the precision of the estimator of the vector of group-specific coefficients.

\appendix

\section{Appendix: Mathematical proofs}

\begin{lemma}
\label{lem:subexponential}
Let $\mathcal{P}$ denote a class of probability measures that satisfy Assumption~\ref{as:subexponential}. Then
\begin{align*}
 \sup_{P \in \mathcal{P}} P \left(\max_{1 \leq i \leq N} \abs{\frac{1}{\sqrt{T}} \sum_{t = 1}^T v_{it}} >  14 \sqrt{\log N} \right) \leq 3 N^{-1}. 
\end{align*}
\end{lemma}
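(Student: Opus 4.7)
The proof plan is a standard Bernstein-type concentration argument: derive a Gaussian-tail bound on $T^{-1/2}\sumtT v_{it}$ for each fixed $i$ via Chernoff, then apply a union bound over $i = 1, \dotsc, N$. The numerical constants $14$ and $3$ are tuned to the sub-exponential parameters $(\nu, \alpha)$ of Assumption~\ref{as:subexponential}, but the mechanics are textbook.

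First I extract a two-sided MGF bound for $v_{it}$. Since $e^{\lambda v_{it}} \leq e^{|\lambda|\,|v_{it}|}$ for every real $\lambda$, Assumption~\ref{as:subexponential} directly yields $\E e^{\lambda v_{it}} \leq e^{\lambda^2 \nu^2/2}$ for every $\lambda \in (-1/\alpha,\, 1/\alpha)$. Combining this with the independence of $\{v_{it}\}_{t=1}^T$ from Assumption~\ref{as:minimal}\ref{as:minimal:independence} gives
\[
\E \exp\!\left(\frac{\lambda}{\sqrt{T}}\,\sumtT v_{it}\right) = \prod_{t=1}^T \E \exp\!\left(\frac{\lambda}{\sqrt{T}} v_{it}\right) \leq \exp\!\left(\frac{\lambda^2 \nu^2}{2}\right), \qquad |\lambda| < \sqrt{T}/\alpha.
\]

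Second, I apply Chernoff's inequality at $u = 14\sqrt{\log N}$. The one-sided tail is bounded by $\inf_{0 < \lambda < \sqrt{T}/\alpha} \exp(-\lambda u + \lambda^2 \nu^2/2)$, and the unconstrained optimum $\lambda^{\star} = u/\nu^2$ lies in the admissible range whenever $14 \sqrt{\log N}/\nu^2 < \sqrt{T}/\alpha$; this holds eventually because the asymptotic sequence \eqref{eq:asymptotic_sequence} forces $\sqrt{\log N}/\sqrt{T} \to 0$. The Gaussian-regime value of Chernoff's bound, combined with the identical bound on the lower tail from the two-sided MGF inequality above, gives
\[
P\!\left(\abs{\frac{1}{\sqrt{T}} \sumtT v_{it}} > 14 \sqrt{\log N}\right) \leq 2 \exp\!\left(-\frac{98 \log N}{\nu^2}\right) = 2 N^{-98/\nu^2}.
\]

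Finally, a union bound over $i = 1, \dotsc, N$ yields
\[
P\!\left(\max_{1 \leq i \leq N} \abs{\frac{1}{\sqrt{T}} \sumtT v_{it}} > 14 \sqrt{\log N}\right) \leq 2 N^{\,1 - 98/\nu^2},
\]
which is at most $3 N^{-1}$ provided $14$ is large enough relative to $\nu$ that $98/\nu^2 \geq 2$, with slack for the leading factor $2$. Uniformity over $\mathcal{P}$ is automatic because every bound depends on $(\nu, \alpha)$ only, and these are common to the class. The only delicate step in the plan is the admissibility check that Chernoff's Gaussian-tail regime is indeed reached at $u = 14 \sqrt{\log N}$; this is precisely what the rate condition \eqref{eq:asymptotic_sequence} delivers, and everything else is arithmetic.
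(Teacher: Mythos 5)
Your overall architecture (MGF bound, Chernoff, union bound) is reasonable, but there is a genuine gap in the final step: your Chernoff bound uses $\nu^2$ as the variance proxy, so the per-unit tail probability you obtain is $2N^{-98/\nu^2}$ and the union bound gives $2N^{1-98/\nu^2}$. This is $\leq 3N^{-1}$ only if $98/\nu^2 \geq 2$, i.e.\ $\nu \leq 7$. But $\nu$ is an arbitrary positive constant supplied by Assumption~\ref{as:subexponential} and can exceed $7$ (the optimal sub-exponential proxy $\nu^2$ is only bounded \emph{below} by the variance, not above), while the lemma asserts the bound with the fixed constant $14$ for the whole class $\mathcal{P}$. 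Your own hedge --- ``provided $14$ is large enough relative to $\nu$'' --- concedes that the argument proves a statement with a $\nu$-dependent constant, which is not the statement of the lemma. The missing idea is that the Gaussian regime of the deviation must be governed by the actual variance $\E v_{it}^2 = 1$, not by the MGF proxy $\nu^2$: replacing your bound $\E e^{\lambda v_{it}} \leq e^{\lambda^2\nu^2/2}$ with a Bernstein-type bound of the form $\E e^{\lambda v_{it}} \leq \exp\bigl(\lambda^2/(2(1 - c\abs{\lambda}))\bigr)$ (valid because the second moment is exactly one and the sub-exponential tail controls the higher moments) yields a tail of order $N^{-98(1+o(1))}$ at $u = 14\sqrt{\log N}$, since $u/\sqrt{T} \to 0$ under \eqref{eq:asymptotic_sequence}, and then the union bound is more than sufficient.

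This is essentially what the paper does, though by a different route: it verifies $\max_{i,t}\norm{v_{it}}_{\psi_1} \leq K$ and invokes the maximal inequality of Theorem~3.4 in \textcite{kuchibhotla2018moving}, whose leading term $7\sqrt{2\log N/T}$ carries a universal constant tied to the unit variance, while the sub-exponential parameters enter only the higher-order term $C_1 K \log(2T)(2\log N)/T$, which is negligible under \eqref{eq:asymptotic_sequence}; the factor $3N^{-1}$ is inherited directly from that theorem at $t = \log N$. Your admissibility check for the Chernoff optimizer is fine and plays the same role as the paper's verification that the higher-order term is dominated, but as written your proof establishes the lemma only for $\nu \leq 7$.
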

\begin{proof}
Fix a probability measure $P \in \mathcal{P}$ and let $\nu, \alpha >0$ denote the parameters from Assumption~\ref{as:subexponential}.
Let $\lambda^* > 0$ large enough that $\lambda^* < 1/\pi$ and $\exp (\nu^2 (\lambda^*)^2/ 2) \leq 2$. 
Define the Orlicz norm 
\begin{align*}
	\norm{v_{it}}_{\psi_1} = \inf \left\{\eta > 0 : \E \left[ \psi_1 (\abs{v_{it}}/\eta) \right] \leq 1 \right\} 
\end{align*}
with $\psi_1(t) = \exp(t) - 1$.
By Assumption~\ref{as:subexponential}, 
\begin{align*}
	\max_{1\leq i \leq N} \max_{1 \leq t \leq T} \E \exp (\lambda^* \abs{v_{it}}) \leq \exp\left(\nu^2 (\lambda^*)^2 / 2 \right) \leq 2.
\end{align*}
Defining $K = 1/\lambda^*$ this implies for all $1 \leq i \leq N$ and $1 \leq t \leq T$
\begin{align*}
	\E \left[
		\exp \left(\frac{\abs{v_{it}}}{K}\right) - 1
	\right] \leq 1
\end{align*}
and therefore
\begin{align*}
	\norm{v_{it}}_{\psi_1} = \inf \left\{\eta > 0 : \E \left[ \exp \left(\frac{\abs{v_{it}}}{\eta}\right) - 1 \right] \leq 1 \right\} \leq K.
\end{align*}
Hence, $\max_{1\leq i \leq N} \max_{1 \leq t \leq T} \norm{v_{it}}_{\psi_1} \leq K$. Applying Theorem~3.4 in \textcite{kuchibhotla2018moving} with $\alpha = 1$, $K_{n, q} = K$, $\Gamma_{n, q} = 1$ and $t = \log N$ yields 
\begin{align*}
	P \left( \max_{1 \leq i \leq N} \abs{\frac{1}{T} \sum_{t = 1}^T v_{it}} > 7 \sqrt {\frac{2 \log N}{T}} + \frac{C_1 K \log (2 T)(2 \log N)}{T} \right) \leq 3 N^{-1}.
\end{align*}
By Assumption~\ref{as:subexponential}, 
\begin{align*}
	\frac{C_1 K \log (2 T)(2 \sqrt{\log N})}{\sqrt{T}}  = o(1)
\end{align*}
and therefore 
\begin{align*}
	14 \sqrt{\frac{\log N}{T}} > 7 \sqrt {\frac{2 \log N}{T}} + \frac{C_1 K \log (2 T)(2 \log N)}{T}.
\end{align*}
\end{proof}
\begin{lemma}
\label{lem:Q_tilde}
Suppose that Assumption \ref{as:minimal}\ref{as:minimal:independence}--\ref{as:minimal:bounded_set} holds.
Then, for all $\epsilon > 0$
\begin{align*}
	\lim_{N, T \to \infty}  P \left( \sup_{\boldsymbol{g} \in \Gamma, \boldsymbol{\mu} \in \mathcal{M}}
	\abs{
		Q_{N, T}(\mathbf{g}, \boldsymbol{\mu}) - \frac{1}{NT} \sum_{i = 1}^N \sumtT
		u_{it}^2
	+
	\frac{1}{N} \sum_{i = 1}^N
	\left(
		\mu_{g_i^0}^0 - \mu_{g_i}
	\right)^2
	} > \epsilon
	\right)
	= 0. 
\end{align*}
\end{lemma}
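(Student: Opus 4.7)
The plan is to substitute $y_{it} = \mu_{g_i^0}^0 + u_{it}$ with $u_{it} = \sigma_i v_{it}$, expand the square in the objective as $(y_{it} - \mu_{g_i})^2 = u_{it}^2 + 2 u_{it}(\mu_{g_i^0}^0 - \mu_{g_i}) + (\mu_{g_i^0}^0 - \mu_{g_i})^2$, and average over $i$ and $t$ to obtain
\[
Q_{N,T}(\mathbf{g}, \boldsymbol{\mu}) = \frac{1}{NT}\sum_{i=1}^N \sum_{t=1}^T u_{it}^2 + \frac{2}{N}\sum_{i=1}^N \sigma_i(\mu_{g_i^0}^0 - \mu_{g_i})\bar{v}_i + \frac{1}{N}\sum_{i=1}^N (\mu_{g_i^0}^0 - \mu_{g_i})^2,
\]
where $\bar{v}_i = T^{-1}\sum_{t=1}^T v_{it}$. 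Up to sign, the quantity inside the absolute value in the lemma equals the cross term $R_N(\mathbf{g}, \boldsymbol{\mu}) := \frac{2}{N}\sum_{i=1}^N \sigma_i(\mu_{g_i^0}^0 - \mu_{g_i})\bar{v}_i$, so it suffices to show that $\sup_{\mathbf{g}, \boldsymbol{\mu}} \abs{R_N(\mathbf{g}, \boldsymbol{\mu})} = o_p(1)$.

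For the uniformization step, I would use that $\mathcal{M}$ is bounded and contains $\boldsymbol{\mu}^0$ by Assumption~\ref{as:minimal}\ref{as:minimal:bounded_set}, so there exists a finite constant $C$ with $\abs{\mu_{g_i^0}^0 - \mu_{g_i}} \leq C$ for every $i$, every $\mathbf{g} \in \Gamma$, and every $\boldsymbol{\mu} \in \mathcal{M}$. This collapses the supremum by brute force:
\[
\sup_{\mathbf{g} \in \Gamma, \boldsymbol{\mu} \in \mathcal{M}} \abs{R_N(\mathbf{g}, \boldsymbol{\mu})} \leq \frac{2C}{N}\sum_{i=1}^N \sigma_i \abs{\bar{v}_i},
\]
and no covering or chaining argument is needed.

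It remains to show $N^{-1}\sum_i \sigma_i \abs{\bar{v}_i} = o_p(1)$. By the serial independence and unit variance in Assumption~\ref{as:minimal}\ref{as:minimal:independence}, $\E \bar{v}_i^2 = 1/T$, so Jensen's inequality gives $\E \abs{\bar{v}_i} \leq 1/\sqrt{T}$. Combining this with the Cauchy--Schwarz inequality yields
\[
\E\!\left[\frac{1}{N}\sum_{i=1}^N \sigma_i \abs{\bar{v}_i}\right] \leq \frac{1}{\sqrt{T}} \cdot \frac{1}{N}\sum_{i=1}^N \sigma_i \leq \sqrt{\frac{1}{T}\cdot\frac{1}{N}\sum_{i=1}^N \sigma_i^2},
\]
which is $o(1)$ by Assumption~\ref{as:minimal}\ref{as:minimal:average_variance}. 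Markov's inequality then delivers the desired convergence in probability.

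The argument is essentially routine, and I do not expect any part of it to be a real obstacle: the uniformization is trivialized by boundedness of $\mathcal{M}$, and the probabilistic step is a one-line moment bound. The only genuinely load-bearing hypothesis is serial independence, which makes $\E \bar{v}_i^2 = 1/T$ exact; were Assumption~\ref{as:minimal}\ref{as:minimal:independence} to be relaxed to weak serial dependence, this would have to be replaced by a long-run variance bound and the rate in Assumption~\ref{as:minimal}\ref{as:minimal:average_variance} strengthened accordingly, which the authors flag in the paragraph following Assumption~\ref{as:minimal}.
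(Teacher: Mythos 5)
Your proposal is correct and follows essentially the same route as the paper: expand the square, absorb the supremum over $(\mathbf{g},\boldsymbol{\mu})$ into a constant via boundedness of $\mathcal{M}$, and kill the cross term with a moment bound plus Markov's inequality (the paper bounds the second moment of the Cauchy--Schwarz envelope where you bound the first moment of $N^{-1}\sum_i\sigma_i\abs{\bar v_i}$, but both reduce to $N^{-1}T^{-1}\sum_i\sigma_i^2=o(1)$ from Assumption~\ref{as:minimal}\ref{as:minimal:average_variance}). Your reading that the displayed statement holds ``up to sign'' is also right --- the third term inside the absolute value should carry a minus sign, as the expansion and the use made of the lemma in Lemma~\ref{lem:muhat_MSEconv} confirm.
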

\begin{proof}
This proof is very similar to the proof of Lemma~A.1 in \textcite{bonhomme2015grouped}.
Expanding $Q_{N, T}$ gives 
\begin{align*}
	Q_{N, T}(\mathbf{g}, \boldsymbol{\mu}) = &
	\frac{1}{NT} \sum_{i = 1}^N \sumtT
		u_{it}^2
	+
	\frac{1}{N} \sum_{i = 1}^N
	\left(
		\mu_{g_i^0}^0 - \mu_{g_i}
	\right)^2
\\
	& +
	\frac{2}{NT} \sum_{i = 1}^N \sum_{t = 1}^T \sigma_i v_{it} \left( \mu^0_{g_i^0} - \mu_{g_i} \right)
	. 
\end{align*}
By Cauchy-Schwarz
\begin{align*}
	\abs{\frac{1}{NT} \sum_{i = 1}^N \sumtT \sigma_i v_{it} \left( \mu^0_{g_i^0} - \mu_{g_i} \right)}^2
	\leq & C_{\mathcal{M}}
	\frac{1}{N} \sum_{i=1}^N \left\{\left( \frac{\sigma_i^2}{T}  \right) \left( \frac{1}{\sqrt{T}} \sumtT v_{it} \right)^2 \right\},
\end{align*}
where $C_{\mathcal{M}}$ is a constant that depends on a bound on $\mathcal{M}$.
Under the assumptions of the lemma,  
\begin{align*}
	 \frac{1}{N} \sum_{i=1}^N \E \left\{\left( \frac{\sigma_i^2}{T}  \right) \left( \frac{1}{\sqrt{T}} \sumtT v_{it} \right)^2 \right\} = o (1).
\end{align*}
Therefore, by Markov's inequality,  
\begin{align*}
 P \left( \frac{1}{N} \sum_{i=1}^N \left\{\left( \frac{\sigma_i^2}{T}  \right) \left( \frac{1}{\sqrt{T}} \sum_{t = 1}^T v_{it} \right)^2 \right\} > \epsilon\right)
= o (1).
\end{align*}
The conclusion follows.
\end{proof}
\begin{lemma}
\label{lem:muhat_MSEconv}
Suppose that Assumption \ref{as:minimal}\ref{as:minimal:independence}--\ref{as:minimal:bounded_set} holds. For each $\epsilon > 0$
	\begin{align*}
		\lim_{N, T \to \infty} P \left (\frac{1}{N} \sum_{i = 1}^N \left(
			\mu^0_{g_i^0} - \hat{\mu}_{\hat g_i}
		\right)^2 > \epsilon \right) = 0.
	\end{align*}
\end{lemma}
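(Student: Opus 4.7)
My plan is to derive the conclusion directly from Lemma~\ref{lem:Q_tilde} combined with the optimality of the estimator. Write $A = \frac{1}{NT}\sum_{i=1}^N \sum_{t=1}^T u_{it}^2$ for the noise term, which does not depend on $(\mathbf{g}, \boldsymbol\mu)$, and $B(\mathbf{g},\boldsymbol\mu) = \frac{1}{N} \sum_{i=1}^N (\mu^0_{g_i^0} - \mu_{g_i})^2$ for the prediction error (reading the sign in Lemma~\ref{lem:Q_tilde} so that $Q_{N,T} = A + B + o_P(1)$ uniformly, which is what the proof of that lemma actually establishes).

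The key observations are, first, $(\hat{\mathbf{g}}, \hat{\boldsymbol\mu})$ is the minimizer of $Q_{N,T}$ over $\Gamma \times \mathcal{M}$, and in particular $Q_{N,T}(\hat{\mathbf{g}}, \hat{\boldsymbol\mu}) \leq Q_{N,T}(\mathbf{g}^0, \boldsymbol\mu^0)$. Second, $\boldsymbol\mu^0 \in \mathcal{M}$ by Assumption~\ref{as:minimal}\ref{as:minimal:bounded_set}, so the right-hand side is a legitimate evaluation of $Q_{N,T}$, and at this point $B(\mathbf{g}^0, \boldsymbol\mu^0) = 0$. Applying Lemma~\ref{lem:Q_tilde} at both $(\hat{\mathbf{g}}, \hat{\boldsymbol\mu})$ and $(\mathbf{g}^0, \boldsymbol\mu^0)$ (the uniform-in-$(\mathbf{g},\boldsymbol\mu)$ statement justifies both evaluations simultaneously) yields
\begin{align*}
	A + B(\hat{\mathbf{g}}, \hat{\boldsymbol\mu}) + o_P(1)
	= Q_{N,T}(\hat{\mathbf{g}}, \hat{\boldsymbol\mu})
	\leq Q_{N,T}(\mathbf{g}^0, \boldsymbol\mu^0)
	= A + o_P(1).
\end{align*}
Cancelling $A$ and using that $B(\hat{\mathbf{g}}, \hat{\boldsymbol\mu}) \geq 0$ gives $B(\hat{\mathbf{g}}, \hat{\boldsymbol\mu}) = o_P(1)$, which is exactly the claim.

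There is essentially no hard part, since the real work is done inside Lemma~\ref{lem:Q_tilde}: that is where Assumption~\ref{as:minimal}\ref{as:minimal:average_variance} is used to kill the cross term $\frac{2}{NT}\sum_{i,t} \sigma_i v_{it}(\mu^0_{g_i^0} - \mu_{g_i})$ uniformly in $(\mathbf{g},\boldsymbol\mu)$. The only minor care needed here is the standard ``minimum trick'': one should evaluate $Q_{N,T}$ at the feasible point $(\mathbf{g}^0, \boldsymbol\mu^0)$ (not the infeasible ``oracle'' value) and exploit the uniform approximation rather than a pointwise one, so that the $o_P(1)$ error at $(\hat{\mathbf{g}}, \hat{\boldsymbol\mu})$ is controlled without needing any a priori information about the estimator.
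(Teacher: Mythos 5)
Your proposal is correct and follows essentially the same route as the paper: invoke the minimizing property $Q_{N,T}(\hat{\mathbf{g}}, \hat{\boldsymbol\mu}) \leq Q_{N,T}(\mathbf{g}^0, \boldsymbol\mu^0)$, apply the uniform approximation of Lemma~\ref{lem:Q_tilde} to both sides, and cancel the common noise term, noting that the prediction error vanishes at the true parameters. Your remark that the sign in the displayed statement of Lemma~\ref{lem:Q_tilde} should be read so that $Q_{N,T} = A + B + o_P(1)$ (as its proof actually establishes) is also accurate.
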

	\begin{proof}
	By definition,
	\begin{align*}
		Q_{N, T}(\hat{\mathbf{g}}, \hat{\boldsymbol{\mu}})
		\leq 
		Q_{N, T} (\mathbf{g}^0, \boldsymbol{\mu}^0).
	\end{align*}
	Let $W_{N, T}$ denote a random variable such that for each $\epsilon > 0$
	\begin{align*}
	\lim_{N, T \to \infty} P (\abs{W_{N, T}} > \epsilon) = 0.	
	\end{align*}
	Applying Lemma~\ref{lem:Q_tilde} to both sides of the inequality yields
	\begin{align*}
		\frac{1}{N} \sum_{i = 1}^N \left( \mu^0_{g_i^0} - \hat{\mu}_{\hat g_i} \right)^2
		\leq 
		\frac{1}{N} \sum_{i = 1}^N \left( \mu^0_{g_i^0} - \mu^0_{g_i^0} \right)^2 + W_{N, T}
	\end{align*}
	and the conclusion follows.	
\end{proof}
\begin{proof}[Proof of Lemma~\ref{lem:mu_consistency}]
This proof is very similar to the proof of Lemma~B.3 in \textcite{bonhomme2015grouped}.
	By Lemma~\ref{lem:muhat_MSEconv}
	\begin{align*}
		\frac{1}{N} \sum_{i = 1}^N \left(
			\mu^0_{g_i^0} - \hat{\mu}_{\hat g_i}
		\right)^2 = o_p(1).
	\end{align*}
	Suppose that there is a constant $\epsilon > 0$ and $g \in \mathbb{G}$ such that for $N,T \to \infty$ satisfying \eqref{eq:asymptotic_sequence}
	\begin{align}
		\label{eq:hausdorff:contradict:cond1}
		\limsup_{N, T \to \infty} P \left( \min_{h \in \mathbb{G}} \abs{\hat{\mu}_h - \mu_{g}^0} > \frac{\epsilon}{q_{\min}} \right) \geq \epsilon.
	\end{align}
	Under $\min_{h} \abs{\hat{\mu}_h - \mu_{g}^0} > \epsilon / q_{\min}$ we have
	\begin{align*}
		\frac{1}{N} \sum_{i = 1}^N \left(
					\mu^0_{g_i^0} - \hat{\mu}_{\hat g_i}
				\right)^2 > 
		\frac{1}{N} \sum_{\substack{i = 1, \dotsc, N \\ g^0(i) = g}} \frac{\epsilon}{q_{\min}} \geq \epsilon
	\end{align*}
	and therefore 
	\begin{align*}
		\limsup_{N, T \to \infty} P \left(
		\frac{1}{N} \sum_{i = 1}^N \left(
					\mu^0_{g_i^0} - \hat{\mu}_{\hat g_i}
				\right)^2 > \epsilon
		\right) \geq \epsilon
		.
	\end{align*}
	This contradicts Lemma~\ref{lem:muhat_MSEconv}. Therefore \eqref{eq:hausdorff:contradict:cond1} does not hold and for all $\epsilon > 0$
	\begin{align*}
		\lim_{N, T \to \infty} P \left(\max_{g \in \mathbb{G}} \min_{h \in \mathbb{G}} \abs{\hat{\mu}_h - \mu_{g}^0} > \epsilon \right)  
		\leq 
		\sum_{g \in \mathbb{G}} \lim_{N, T \to \infty} P \left(\min_{h \in \mathbb{G}} \abs{\hat{\mu}_h - \mu_{g}^0} > \epsilon \right) = 0.  
	\end{align*}
	This result implies that, for any constant $0 < \epsilon < M_G/2$ and	
	\begin{align*}
		\limsup_{N, T \to \infty} P \left (\max_{g \in \mathbb{G}} \min_{h \in \mathbb{G}} \abs{\hat{\mu}_h - \mu_g^0} \geq \epsilon \right) < \epsilon.
	\end{align*}	
	If 
	\begin{align*}
		\max_{g \in \mathbb{G}} \min_{h \in \mathbb{G}} \abs{\hat{\mu}_h - \mu_g^0} < \epsilon
	\end{align*}
	then there exists, to each $g \in \mathbb{G}$, a non-empty set $H_g \subset \mathbb{G}$ such that $\abs{\hat{\mu}_h - \mu^0_g} < \epsilon$ for all $h \in H_g$. We now prove $H_g \cap H_{g'} = \emptyset$ for $g, g' \in \mathbb{G}$ with $g \neq g'$. Suppose $h \in H_g$. Then 
	\begin{align*}
		\abs{\hat{\mu}_{h} - \mu^0_{g'}} = \abs{\hat{\mu}_{h} - \mu^0_{g} + \mu^0_{g} - \mu^0_{g'}} 
		\geq \abs{\mu^0_{g'} - \mu^0_g} - \abs{\hat{\mu}_{h} - \mu^0_{g}} \geq M_G - \epsilon > \epsilon.
	\end{align*}
	Therefore $h \neq H_{g'}$ and $H_{g} \cap H_{g'} = \emptyset$. Since $H_{g} \neq \emptyset$ this implies that all sets $H_g$, $g \in \mathbb{G}$ are singletons. Define the function $\pi: \mathbb{G} \to \mathbb{G}$ that maps each group $g$ to the unique $h$ such that $\abs{\hat{\mu}_h - \mu^0_g} < \epsilon$. The function $\pi$ is a bijection and hence a permutation function. For any given $h \in \mathbb{G}$ setting $g = \pi^{-1}(h)$ guarantees $\lvert{\hat{\mu}_h - \mu_{g}^0}\rvert < \epsilon$. 
	Therefore, 
	\begin{align*}
		\limsup_{N, T \to \infty} P\left (\max_{h \in \mathbb{G}} \abs{\hat{\mu}_{\pi(g)} - \mu_g^0} \geq \epsilon \right) \leq \epsilon.
	\end{align*}	
\end{proof}
\begin{proof}[Proof of Lemma \ref{lem:gi_consistency}]
Let $\pi: \mathbb{R} \to \mathbb{R}$ denote the permutation function from Lemma~\ref{lem:mu_consistency}. For $i = 1, \dotsc, N$, we have $\hat{g}_i \neq \pi(g_i^0)$ only if there is $g \in \mathbb{G} \setminus \{\pi(g_i^0) \}$ such that 
\begin{align*}
	& \sum_{t=1}^T \left( y_{it} - \hat{\mu}_{\pi (g_i^0)} \right)^2
	\geq
	\sum_{t=1}^T \left( y_{it} - \hat{\mu}_g \right)^2.
\end{align*}
Plugging in $y_{it} = \mu^0_{g_i^0} + \sigma_i v_{it}$ and rewriting the inequality yields
\begin{align*}
\sgn(\hat{\mu}_g - \hat{\mu}_{\pi (g_i^0)}) \frac{1}{\sqrt{T}  } \sum_{t = 1}^T v_{it} 
\geq \frac{\sqrt{T}}{2 \sigma_i} \abs{\hat{\mu}_g - \hat{\mu}_{\pi(g_i^0)}}
- 
\sgn(\hat{\mu}_g - \hat{\mu}_{\pi (g_i^0)}) \frac{\sqrt{T}}{\sigma_i} (\mu_{g_i^0} - \hat{\mu}_{\pi(g_i^0)}) 
.
\end{align*}
Let $\mathcal{E}_{N, T}$ denote the event 
\begin{align*}
	\mathcal{E}_{N, T} = \{\max_{g \in \mathbb{G}} \abs{\hat{\mu}_{\pi (g)} - \mu_g^0} > {M_G}/{5} \}.
\end{align*}
On $\mathcal{E}_{N, T}$, 
\begin{align*}
	& \frac{\sqrt{T}}{2 \sigma_i} \abs{\hat{\mu}_g - \hat{\mu}_{\pi(g_i^0)}}
- 
\sgn(\hat{\mu}_g - \hat{\mu}_{\pi (g_i^0)}) \frac{\sqrt{T}}{2\sigma_i} (\mu_{g_i^0} - \hat{\mu}_{\pi(g_i^0)}) 
\\
\geq & \frac{\sqrt{T}}{2 \sigma_i} \left(
	\abs{\mu^0_{\pi^{-1}(g)} - \mu^0_{g_i^0}}
	- \abs{\hat{\mu}_g - \mu^0_{\pi^{-1}(g)}} 
	- 3 \abs{\hat{\mu}_{\pi(g_i^0)} - \mu^0_{g_i^0}}
\right) \geq \frac{\sqrt{T}}{10 \sigma_i} M_G.
\end{align*}
Therefore, 
\begin{align*}
	& P \left(\max_{i \in \mathcal{I}_{N,T}} \abs{\hat{g}_i - \pi (g_i^0)} > 0 \right) 
\\
	\leq &
	P \left( \text{there exists $i \in \mathcal{I}_{N,T}$ such that} \:  \sgn(\hat{\mu}_g - \hat{\mu}_{\pi (g_i^0)}) \frac{1}{\sqrt{T}  } \sum_{t = 1}^T v_{it}  
	\geq \frac{\sqrt{T}}{10 \sigma_i} M_G \right) + P \left(\mathcal{E}_{N,T} \right)
\\
	\leq &
	P \left( \max_{1 \leq i \leq N} \abs{\frac{1}{\sqrt{T}} \sum_{t = 1}^T v_{it}}
	\geq 14 \sqrt{\log N} \right) + P \left(\mathcal{E}_{N,T} \right), 
\end{align*}
where the last inequality follows since 
\begin{align*}
	\frac{\sqrt{T}}{10 \sigma_i} M_G \geq 14 \sqrt{\log N}
\end{align*}
for all $i \in \mathcal{I}_{N,T}$ and $\mathcal{I}_{N,T} \subset \{1, \dotsc, N\}$.
By Lemma~\ref{lem:mu_consistency} and Lemma~\ref{lem:subexponential}, 
\begin{align*}
	\lim_{N, T \to \infty} \Bigg[ P \left( \max_{1 \leq i \leq N} \abs{\frac{1}{\sqrt{T}} \sum_{t = 1}^T v_{it}}
	\geq 14 \sqrt{\log N} \right) + P \left(\mathcal{E}_{N,T} \right) \bigg] = 0.
\end{align*}
\end{proof}
\begin{proof}[Proof of Theorem \ref{th:mu-consistent}]
Throughout the proof we omit the $N, T$ subscripts and write $\mathcal{I}$, $\mathcal{I}(g)$ and $\mathcal{I}^\mathsf{c}$ instead of $\mathcal{I}_{N,T}$, $\mathcal{I}_{N,T} (g)$ and $\mathcal{I}^\mathsf{c}_{N,T}$.
Assumption~\ref{as:normality}\ref{as:normality:inconsistent} implies 
\begin{align*}
	\frac{\# \mathcal{I}_{N, T}^\mathsf{c}}{N} = o (1).
\end{align*}
Hence, for $g \in \mathbb{G}$,  
\begin{align*}
	1 \leq \frac{\tilde{N}_g}{N_g} \leq \frac{N_g}{N_g} + \frac{\# \mathcal{I}_{N, T}^\mathsf{c}}{N_g} 
	\leq 1 + (1 + o (1)) \frac{q_g \# \mathcal{I}_{N, T}^\mathsf{c}}{N} 
	\leq 1 + o (1)
\end{align*}
and therefore 
\begin{align*}
	\abs{\frac{\tilde{N}_g}{N_g} - 1} = o (1).
\end{align*}
For $g \in \mathbb{G}$,  
\begin{align*}
	\frac{\hat{N}_h}{\tilde{N}_g} = \frac{1}{\tilde{N}_g} \sum_{i \in \mathcal{I}^{\mathsf{c}}} 1(\pi (\hat{g}_i) \neq g) + \frac{1}{\tilde{N}_g} \sum_{i \in \mathcal{I}} 1 \left(\pi(\hat{g}) = g \right).
\end{align*}
By Lemma~\ref{lem:gi_consistency}
\begin{align*}
	\lim_{N, T \to \infty} P \left(
		\frac{1}{\tilde{N}_g} \sum_{i \in \mathcal{I}} 1 \left(\pi(\hat{g}) = g \right) \neq 1
	\right) = 0.
\end{align*}
Moreover, 
\begin{align*}
	\frac{1}{\tilde{N}_g} \sum_{i \in \mathcal{I}^{\mathsf{c}}} 1(\pi (\hat{g}_i) \neq g)
	\leq (1 + o (1)) \frac{\# \mathcal{I}_{N, T}^\mathsf{c}}{q_g N} \leq o (1)
\end{align*}
and therefore for all $\epsilon > 0$
\begin{align*}
	\lim_{N, T \to \infty} P \left(
		\abs{\frac{\hat{N}_h}{\tilde{N}_g} - 1} > \epsilon
	\right) = 0.
\end{align*}
For all $g \in \mathbb{G}$ we can bound 
\begin{align*}
	& \abs{
	\frac{1}{\hat{N}_g T} \sum_{i \in \mathcal{I}^{\mathsf{c}}} \sum_{t = 1}^T 1 \left(\pi(\hat{g}_i) = g \right) y_{it}} 
\\
	\leq &
	\frac{1}{q_g N} \left(1 + o_p(1) \right) \Bigg(  \sum_{i \in \mathcal{I}^{\mathsf{c}}}  1 \left(\pi(\hat{g}_i) = g \right) \abs{\mu_i^0} + \frac{1}{\sqrt{T}} \sum_{i \in \mathcal{I}^{\mathsf{c}}}  1 \left(\pi(\hat{g}_i) = g \right) \sigma_i \left(\frac{1}{\sqrt{T}} \sum_{t = 1}^T v_{it} \right)
	\Bigg)
\\
	\leq &
	\frac{1}{q_g N} \left(1 + o_p(1) \right) \Bigg( \# \mathcal{I}^{\mathsf{c}} \sup_{\boldsymbol \mu \in \mathcal{M}} \norm{\boldsymbol\mu}_{\max} + \frac{\# \mathcal{I}^{\mathsf{c}} }{\sqrt{T}} \sqrt{\frac{1}{\# \mathcal{I}^{\mathsf{c}}} \sum_{i \in \mathcal{I}^{\mathsf{c}}} \sigma_i^2} 
	\sqrt{\frac{1}{\# \mathcal{I}^{\mathsf{c}}} \sum_{i \in \mathcal{I}^{\mathsf{c}}}\left(\frac{1}{\sqrt{T}} \sum_{t = 1}^T v_{it} \right)^2}
	\Bigg)
	,
\end{align*}
where $\norm{\cdot}_{\max}$ is the max norm in $\mathbb{R}^G$.
By independence over time and $\E v_{it}^2 = 1$ we have
\begin{align*}
	\E \frac{1}{\# \mathcal{I}^{\mathsf{c}}} \sum_{i \in \mathcal{I}^{\mathsf{c}}}\left(\frac{1}{\sqrt{T}} \sum_{t = 1}^T v_{it} \right)^2 = 1
\end{align*}
and hence by the Markov inequality
\begin{align*}
 	\frac{1}{\# \mathcal{I}^{\mathsf{c}}} \sum_{i \in \mathcal{I}^{\mathsf{c}}}\left(\frac{1}{\sqrt{T}} \sum_{t = 1}^T v_{it} \right)^2 = O_p (1).
\end{align*} 
In addition, $ \sup_{\boldsymbol \mu \in \mathcal{M}} \norm{\boldsymbol\mu}_{\max}$ is bounded by Assumption~\ref{as:minimal}\ref{as:minimal:bounded_set}. 
Therefore 
\begin{align}
\label{eq:muhat:decompose:part1}
\begin{aligned}
	& \abs{
	\frac{1}{\hat{N}_g T} \sum_{i \in \mathcal{I}^{\mathsf{c}}} \sum_{t = 1}^T 1 \left(\pi(\hat{g}_i) = g \right) y_{it}} 
\\
	\leq &
	O(1) \left(1 + o_p(1) \right) \frac{\# \mathcal{I}^{\mathsf{c}}}{N} \left( 
		1 + (1 + O_p(1)) T^{-1/2} \sqrt{\frac{1}{\# \mathcal{I}^{\mathsf{c}}} \sum_{i \in \mathcal{I}^{\mathsf{c}}} \sigma_i^2} 
	\right) = o_p \left( \frac{1}{\sqrt{NT}} \right), 
\end{aligned}
\end{align}
where the last equality follows from Assumption~\ref{as:normality}\ref{as:normality:inconsistent}.
We will now apply the Lindeberg-Feller CLT to show
\begin{align}
\frac{1}{\sqrt{\tilde{N}_g}} \sum_{i \in \mathcal{I}(g)} 
\left\{\sigma_i \left(\frac{1}{\sqrt{T}} \sum_{t=1}^T v_{it} \right) \right\}
\overset{d}{\longrightarrow}
\mathcal{N}(0, \delta_g)
. \label{eq:lf-clt}
\end{align}
The variance of the term is given by 
\begin{align*}
 \E \left[
\frac{1}{{T}} \sum_{t=1}^T
\left(
\frac{1}{\sqrt{\tilde{N}_g}} \sum_{i \in \mathcal{I}(g)} 
\sigma_i v_{it} \right)^2
\right]
=  \frac{1}{\tilde{N}_g} \sum_{i \in \mathcal{I}(g)} 
\sigma_i^2 + 
\frac{1}{\tilde{N}_g} \sum_{\substack{i,j \in \mathcal{I}(g)\\i \neq j}} 
\sigma_i \sigma_j \cov(v_{i1}, v_{j1})
\to \delta_g
\end{align*}
To verify the Lindeberg condition it suffices to show that 
\begin{align}
	\label{eq:Lindeberg_sufficient}
	\E \left[ T^{-1/2} \sumtT z_{N, t}\right]^4 \leq K
\end{align}
eventually, where 
\[
	z_{N, t} = \frac{1}{\sqrt{\tilde{N}_g}} \sum_{i \in \mathcal{I}(g)} 
\sigma_i v_{it}
\]
and $K$ is a constant that does not depend on $N$ and $T$.
By independence across time periods 
\begin{align*}
	\E \left[ \frac{1}{\sqrt{T}} \sumtT z_{N, t}\right]^4
	= 
	\frac{\binom{4}{2}}{2!} \frac{1}{T^2}  \sum_{s=1}^T \sum_{t \neq s} \E [z_{N, s}^2]\E [z_{N, t}^2] 
	+ \frac{1}{T^2} \sum_{t = 1}^T \E [z_{N, t}^4] 
	=
	3 \delta_g^2 + \frac{1}{T^2} \sum_{t = 1}^T \E [z_{N, t}^4] + o \left( 1 \right)
	.
\end{align*}
To bound the right-hand side write for $t = 1, \dotsc, T$
\begin{align*}
	& \E \left[\sqrt{\tilde{N}_g} z_{N, t} \right]^4
	=
	\E \left[ \sum_{i \in \mathcal{I}(g)} \sigma_i v_{it} \right]^4
	\\
	= & 
	\sum_{i \in \mathcal{I}(g)} \sigma_i^4 \E [v_{it}^4] + 
	\frac{\binom{4}{2}}{2!}
	\sum_{\substack{i, j \in \mathcal{I}(g) \\ i \neq j}} \sigma_i^2 \sigma_j^2 \E [v_{it}^2 v_{jt}^2] 
	 + 
	\frac{\binom{4}{2}}{2!}
	\sum_{\substack{i, j, k \in \mathcal{I} \\ \{i\} \cap \{j\} \cap \{k\} = \emptyset}} 
	\sigma_i^2 \sigma_j \sigma_k  \E [v_{it}^{2} v_{jt} v_{kt}]
	\\
	& +
	\sum_{\substack{i, j, k, \ell \in \mathcal{I} \\ \{i\} \cap \{j\} \cap \{k\} \cap \{\ell\}= \emptyset}} 
	\sigma_i \sigma_j \sigma_k \sigma_\ell \E [v_{it} v_{jt} v_{kt} v_{\ell t}]
	= I_{1, t} + I_{2, t} + I_{3, t} + I_{4, t}.	
\end{align*}
To show \eqref{eq:Lindeberg_sufficient} it suffices to show $\sum_{t = 1}^T I_{k, t} = O (N^2 T^2)$ for $k = 1, \dotsc, 4$.
Assumption~\ref{as:normality}\ref{as:normality:inconsistent} implies 
\begin{align*}
	\abs{\frac{\# \mathcal{I}}{N} - 1} = o (1).
\end{align*}
Moreover, by Assumption~\ref{as:subexponential} there is a finite constant $M_4$ independent of $N$ and $T$ such that $\max_{1\leq t \leq T} \E [v_{it}^4] \leq M_4$. 
Therefore, 
\begin{align*}
	\sum_{t = 1}^T I_{1, t} \leq M_4 N T (1 + o (1))
	\left( \frac{1}{\# \mathcal{I}} \sum_{i \in \mathcal{I}} \sigma_i^4 \right) = 
	O (N^2 T^2).
\end{align*}
and
\begin{align*}
	\sum_{t = 1}^T I_{2, T} 
	\leq & 3 M_4 (1 + o(1) (N^2 T) 
	\left\{
	\frac{1}{\# \mathcal{I}} \sum_{i \in \mathcal{I}} \sigma_i^2 = O (N^2 T^2)
	\right\}^2.
\end{align*}
Moreover, Assumption~\ref{as:normality}\ref{as:normality:limit-cs-depend} yields $\sum_{t=1}^T I_{k, t} = O (N^2 T^2)$ for $k = 1, 2$. This proves \eqref{eq:Lindeberg_sufficient}.
For $g \in \mathbb{G}$ 
\begin{align*}
	\hat{\mu}_{\pi(g)} = &
	\frac{1}{\hat{N}_g T} \sum_{i \in \mathcal{I}^{\mathsf{c}}} \sum_{t = 1}^T 1 \left(\pi(\hat{g}_i) = g \right) y_{it}
	+ \frac{1}{\hat{N}_g T} \sum_{i \in \mathcal{I} \setminus \mathcal{I} (g)} \sum_{t = 1}^T 1 \left(\pi(\hat{g}_i) = g \right) y_{it}
\\
	& 	+ (1 + o_p(1)) \frac{1}{\tilde{N}_g T} \sum_{i \in \mathcal{I} (g)} \sum_{t = 1}^T 1 \left(\pi(\hat{g}_i) = g \right) (\mu_{g_i^0} + \sigma_i v_{it})
\end{align*}
The first term on the right-hand side is $o_p \left((NT)^{-1/2}\right)$ by \eqref{eq:muhat:decompose:part1}. 
The second term is $o_p \left((NT)^{-1/2}\right)$ by Lemma~\ref{lem:gi_consistency}. The third term converges to a centered normal with variance $\delta_g$ by \eqref{eq:lf-clt} and Slutzky's lemma.
\end{proof}

\printbibliography
	
\end{document}